\begin{document}

\newcommand{\ket}[1]{\left |#1 \right \rangle}
\newcommand{\bra}[1]{\left  \langle #1 \right |}
\newcommand{\braket}[2]{\left \langle #1 \middle | #2 \right \rangle}
\newcommand{\ketbra}[2]{\left | #1 \middle \rangle \middle \langle #2 \right |}

\newtheorem{conj}{Conjecture}
\newtheorem{thm}{Theorem}

\widetext

\title{Pairwise Concurrence in Cyclically Symmetric Quantum States}
\author{Alexander Meill and David A. Meyer}
\date{\today}

\begin{abstract}
We provide an initial characterization of pairwise concurrence in quantum states which are invariant under cyclic permutations of party labeling.  We prove that maximal entanglement can be entirely described by adjacent pairs, then give explicit descriptions of those states in specific subsets of 4 and 5 qubit states - X states.  We also construct a monogamy bound on shared concurrences in the same subsets in 4 and 5 qubits, finding that above non-maximal entanglement thresholds, no other entanglements are possible.
\end{abstract}

\maketitle

\section{Introduction}

Entanglement in quantum mechanics has been an exciting avenue of research in physics since its discovery.  It plays a central role in quantum computing \cite{1}\cite{2} and offers meaningful contributions to high energy theory \cite{3} and condensed matter physics \cite{4}\cite{5}.  Despite the attention that entanglement has received, its fundamental properties are still not fully understood.  Constraints on entanglement are generally challenging to compute due to the fact that many entanglement measures involve extremizations which are difficult to handle analytically.  Those measures which do have a closed function on state parameters are difficult to calculate for high dimensional systems and many particles.

A common approach to studying these large Hilbert spaces is to consider entanglement in some smaller subspace which reduces the number of state parameters.  Entanglement has been studied in states which are invariant under permutation of party labeling \cite{6}, X-states \cite{7}, and matrix product states \cite{8} among other subsets.  This paper considers the pairwise concurrence entanglement measure, defined in \cite{9}, of $n$ qubit states which are invariant under cyclic permutation of party labeling.  These cyclically symmetric (CS) states are of significant interest to translation-invariant condensed matter systems \cite{20}\cite{19}\cite{10} and 1-D spin chains with periodic boundary conditions \cite{11}.  Their SLOCC properties were also examined in \cite{21}.

The CS subspace of an $n$ qubit system offers a significant simplification to the entanglement picture by constraining the number of allowable distinct types of entanglement.  Any subset or partitioning of parties to calculate entanglement among, no matter the measure of entanglement, would be equated to that of other sets of parties by the cyclic permutation invariance of the state.  We narrow this picture by only examining pairwise entanglements as measured by the concurrence, $\mathcal C$, which is chosen for its relative analytic symplicity and for its relationship to the entanglment of formation \cite{19}.  The cyclic symmetry implies that for any pairwise concurrence $\mathcal{C}_{i,j}$ between parties $i$ and $j$, $\mathcal{C}_{i,j}=\mathcal{C}_{i+k,j+k}$, where the party label subscripts are to be evaluated mod $n$.  So each allowable pairwise concurrence in CS-states corresponds to the spacing between party labelings.  As a point of notation, define $\mathcal{C}_k^{(n)}$ to be the pairwise concurrence between parties $k$-away in an $n$ qubit CS-state.  Note that $k$ runs from 1 to $\lfloor \frac n2 \rfloor$ as any $k > \frac n2$ is equivalent to the $n-k$ spacing.  The $\lfloor \frac n2 \rfloor$ distinct $\mathcal{C}_k^{(n)}$ is reduced from the ${\binom n2}$ distinct pairs in a general $n$ qubit state.

The entanglement picture in CS-states is further simplified by the fact that many $\mathcal{C}_k^{(n)}$ share the same properties.  To see this, consider some $m$ which is not a factor of $n$, and the associated permutation, $\pi \in S_n$,
\begin{equation}
\pi: i \mapsto  m i \mod n.
\end{equation}
Note that $\pi$ is invertible only when $m=1$ or $m \nmid n$.  Where obvious, we will interchangably use $\pi$ to denote the permutation on the tensor factors, as well as the associated unitary operator acting on the state.  Permuting the party labels of some CS-state, $\ket{\psi}$, according to $\pi^{-1}$ will leave the state in some new CS-state, $\ket{\chi}=\pi^{-1} \ket{\psi}$, which obeys $\mathcal{C}_{i,j} \left( \ket{\psi} \right) = \mathcal{C}_{\pi(i),\pi(j)} \left( \ket{\chi} \right)$.  This means that any properties of $\mathcal{C}_k^{(n)}$ will be shared by $\mathcal{C}_{mk}^{(n)}$ for each $m$ which is not a factor of $n$.  It then suffices to only examine the constraints on $\mathcal{C}_k^{(n)}$ for $k | n$.

These simplifications, along with the natural reduction in state parameters, makes an analytic description of the CS entanglement more approachable.  This paper makes a preliminary attempt at analyzing the allowed pairwise concurrences in CS-states.  First, we prove that maximal entanglement in CS-states can be entirely understood in terms of the maxima of $\mathcal C_1^{(n)}$ and explicitly determine the maxima on the X-state subspace for 4 and 5 qubits.  We then discuss the bounds on multiple concurrences, again with an analytic description for X-states in 4 and 5 qubits.  Due to the extensive nature of the calculations, significant portions of analysis are relegated to the appendices.

\section{Maximally Entangled States}

A natural question when examining a subset of quantum states is which states maximize entanglement within that subset, and what is that maximal entanglement?  As a result of the discussion in the previous section, we need only examine the maxima of $\mathcal{C}_k^{(n)}$ for $1\leq k \leq \lfloor \frac n2 \rfloor$ which are not factors of $n$.  Denote a state which maximizes $\mathcal{C}_k^{(n)}$ as $\ket{\psi_k^{(n)}}$.  Finding the $\ket{\psi_k^{(n)}}$ and the associated maximal $\mathcal{C}_k^{(n)}$ is greatly simplified by the following theorem,

\begin{thm}
For $k|n$, $\max \mathcal{C}_k^{(n)}= \max \mathcal{C}_1^{(n/k)}$ and a corresponding state which maximizes $ \mathcal{C}_k^{(n)}$ can be constructed as 
\begin{equation}
\ket{\psi_k^{(n)}}= \bigotimes_{i=0}^{k-1} \ket{\psi_1^{(n/k)}}_{ k \{n/k\} + i },
\end{equation}
where $\{n/k\}$ represents the set of integers from $0$ to $n/k-1$.  These integers, multiplied by $k$ then incremented by $i$, indicate the party labelings in the overall state.
\end{thm}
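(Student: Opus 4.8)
The plan is to establish the two inequalities $\max\mathcal{C}_k^{(n)}\ge\max\mathcal{C}_1^{(n/k)}$ and $\max\mathcal{C}_k^{(n)}\le\max\mathcal{C}_1^{(n/k)}$ separately: the first by checking that the exhibited state $\ket{\psi_k^{(n)}}$ does the job, and the second by restricting an arbitrary CS-state to the spacing-$k$ sublattice.

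For the lower bound I would first verify that $\ket{\psi_k^{(n)}}$ is a genuine CS-state. Each party label of the $n$-qubit system can be written uniquely as $kj+i$ with $0\le i<k$ and $0\le j<n/k$, so a site is labeled by a pair $(j,i)$, where $i$ indexes which of the $k$ tensor factors it belongs to and $j$ its position within that factor. The cyclic shift $p\mapsto p+1$ then acts as $(j,i)\mapsto(j,i+1)$ for $i<k-1$ and as $(j,k-1)\mapsto((j+1)\bmod(n/k),0)$; that is, it cyclically permutes the $k$ identical factors while also shifting the ``carried'' factor by one internal site. Since $\ket{\psi_1^{(n/k)}}$ is itself a CS-state (a maximizer of the CS-only quantity $\mathcal{C}_1^{(n/k)}$ is a CS-state), that internal one-site shift acts trivially, so the tensor product is left invariant and $\ket{\psi_k^{(n)}}$ is admissible. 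Parties $0$ and $k$ correspond to the sites $(0,0)$ and $(1,0)$: they lie in the same factor, form a nearest-neighbor pair of that factor, and everything else is in a product state with them. Hence $\rho_{0,k}(\ket{\psi_k^{(n)}})$ is exactly the nearest-neighbor two-qubit reduced state of $\ket{\psi_1^{(n/k)}}$, so $\mathcal{C}_k^{(n)}(\ket{\psi_k^{(n)}})=\mathcal{C}_1^{(n/k)}(\ket{\psi_1^{(n/k)}})=\max\mathcal{C}_1^{(n/k)}$, giving the lower bound.

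For the upper bound, let $\ket{\psi}$ be any CS-state on $n$ qubits and let $S=k\{n/k\}=\{0,k,\dots,(n/k-1)k\}$ be the spacing-$k$ sublattice, with $\rho_S=\operatorname{Tr}_{\bar S}\ketbra{\psi}{\psi}$. Because the shift by $k$ is a symmetry of $\ketbra{\psi}{\psi}$, maps $\bar S$ to itself, and acts on $S$ precisely as a one-site cyclic shift of an $(n/k)$-party system, $\rho_S$ is a CS-state of $n/k$ qubits; moreover $(0,k)$ is a nearest-neighbor pair \emph{within} $S$, so $\mathcal{C}_k^{(n)}(\ket{\psi})=\mathcal{C}(\rho_{0,k})=\mathcal{C}_1^{(n/k)}(\rho_S)$. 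The final step is to bound $\mathcal{C}_1^{(n/k)}$ of the possibly \emph{mixed} state $\rho_S$ by its value on pure CS-states: the eigenspaces of $\rho_S$ are invariant under the cyclic shift, so one can diagonalize $\rho_S=\sum_\ell q_\ell\ketbra{v_\ell}{v_\ell}$ with each $\ket{v_\ell}$ an eigenvector of the shift, hence with all of its nearest-neighbor reduced states equal, i.e.\ behaving as a pure CS-state for entanglement purposes. Convexity of the two-qubit concurrence then yields $\mathcal{C}_1^{(n/k)}(\rho_S)\le\max_\ell\mathcal{C}_1^{(n/k)}(\ket{v_\ell})\le\max\mathcal{C}_1^{(n/k)}$, and combining with the lower bound finishes the proof.

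The index bookkeeping and the fact that a partial trace commutes with a permutation fixing the traced-out sites are routine. The step I expect to need the most care is the reduction of the mixed state $\rho_S$ to pure CS-states: without it the identity would only express $\max\mathcal{C}_k^{(n)}$ in terms of a maximum over \emph{all} CS-states of $n/k$ qubits, and it is the shift-eigenbasis diagonalization together with convexity of concurrence that repairs this. A secondary point worth pinning down is that the ``carry'' appearing in the cyclic shift of the full system is genuinely absorbed by the cyclic symmetry of $\ket{\psi_1^{(n/k)}}$, which is precisely where the hypothesis $k\mid n$ (rather than merely $k\le n/2$) enters.
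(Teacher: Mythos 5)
Your overall strategy coincides with the paper's: both reduce $\mathcal{C}_k^{(n)}$ of an arbitrary CS-state to $\mathcal{C}_1^{(n/k)}$ of the reduced state $\rho_S$ on the spacing-$k$ sublattice, diagonalize $\rho_S$ compatibly with the sublattice shift, and finish with convexity of the concurrence; your verification of the lower bound (the ``carry'' analysis showing the tensor-product construction is genuinely cyclically symmetric and that $\rho_{0,k}$ equals the nearest-neighbor reduced state of $\ket{\psi_1^{(n/k)}}$) is in fact more explicit than the paper's, which simply asserts saturation.

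There is, however, a genuine gap at exactly the step you flagged as delicate. From ``the shift by $k$ is a symmetry of $\ketbra{\psi}{\psi}$'' you obtain only the conjugation identity $\pi_S\,\rho_S\,\pi_S^{\dagger}=\rho_S$, i.e.\ commutation. That lets you choose eigenvectors $\ket{v_\ell}$ of $\rho_S$ that are simultaneously eigenvectors of the shift, but with arbitrary eigenvalues $e^{i\theta_\ell}$; such a $\ket{v_\ell}$ with $\theta_\ell\neq 0$ is \emph{not} a CS-state. Your closing inequality $\mathcal{C}_1^{(n/k)}(\ket{v_\ell})\le\max\mathcal{C}_1^{(n/k)}$ compares against a maximum taken over CS-states only, and the observation that all nearest-neighbor reduced states of $\ket{v_\ell}$ coincide does not place $\ket{v_\ell}$ in that feasible set; a priori a nontrivial shift eigenvector could exceed the CS maximum, in which case your argument would only sandwich $\max\mathcal{C}_k^{(n)}$ between the CS maximum and a possibly larger quantity, not prove the claimed equality. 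That commutation alone cannot rule this out is illustrated by $\ket{\Psi^-}_{1,3}\otimes\ket{\Psi^-}_{2,4}$ (singlets on the two spacing-$2$ sublattices of $n=4$): it is invariant under conjugation by $T^2$, yet $\rho_{S}$ is supported entirely on the eigenvalue $-1$ eigenspace of the sublattice swap. The paper closes the gap by proving the stronger \emph{one-sided} identity $\pi\,\rho_S=\rho_S\,\pi=\rho_S$ for every $\pi$ in the sublattice cyclic group, which forces $e^{i\theta_\ell}=1$ on the support of $\rho_S$ and hence makes every $\ket{v_\ell}$ a genuine CS-state before convexity is invoked. To repair your argument you need this one-sided invariance (which requires using the full cyclic symmetry of $\ket{\psi}$ in the coefficient manipulation, not merely that $T^k$ is a symmetry under conjugation), or else an independent proof that shift eigenvectors with nontrivial eigenvalue cannot beat the CS maximum of $\mathcal{C}_1^{(n/k)}$.
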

\begin{proof}
Consider some $n$ qubit CS-state, $\ket{\psi^{(n)}}=\sum_{\textbf{i}\in \mathbb Z_2^n} \psi_{\textbf i} \ket{\textbf i}$ and some $k|n$.  Examine the reduced density matrix,
\begin{eqnarray}
\rho_{k \{n/k\}} &=& \text{Tr}_{\overline{k \{n/k\}}} \left( \ketbra{\psi^{(n)}}{\psi^{(n)}} \right) \\
&=& \sum_{\textbf j \in \mathbb Z_2^{n-n/k}} \sum_{\textbf a, \textbf b \in \mathbb Z_2^{n/k}} \psi_{\textbf j, \textbf a}^{\vphantom{*}} \ketbra{\textbf a}{\textbf b} \psi_{\textbf j, \textbf b}^*,
\end{eqnarray}
where $\textbf a$ and $\textbf b$ indicate basis elements in the parties in $k \{ n/k \}$, while $\textbf j$ indicate basis elements in the remaining $n-n/k$ parties.  Notably, this reduced state obeys, by definition, $\mathcal{C}_1^{(n/k)} \left( \rho_{k \{n/k\}} \right) = \mathcal{C}_k^{(n)} \left( \ket{\psi^{(n)}} \right )$.

Now label any $\pi \in \mathbb Z_{n} \subset S_n$ as
\begin{equation}
\pi_m^{(n)}:i \mapsto i+m \mod n.
\end{equation}
We can then examine that, for any $m$,
\begin{eqnarray}
\pi_{n/k-m}^{(n/k)} \rho_{k \{ n/k\}}^{\vphantom{(}} &=&  \sum_{\textbf j } \sum_{\textbf a, \textbf b} \psi_{\textbf j, \pi_m^{(n/k)} (\textbf a)}^{\vphantom{*}} \ketbra{\textbf a}{\textbf b} \psi_{\textbf j, \textbf b \vphantom{\pi_m^{(n/k)}}}^* \\
 &=& \sum_{\textbf j } \sum_{\textbf a, \textbf b} \psi_{\pi_{km}^{(n)} (\textbf j, \textbf a)}^{\vphantom{*}} \ketbra{\textbf a}{\textbf b} \psi_{\textbf j, \textbf b \vphantom{\pi_m^{(n/k)}}}^* \\
&=& \rho_{k \{n/k \}},
\end{eqnarray}
where the first equality describes the action of a permutation on the parties in $k\{n/k\}$, the second extends that permutation to the $n$ parties and rearranges using the sum over \textbf{j}, and the third uses the cyclic symmetry of $\ket{\psi^{(n)}}$.
And so, for any $\pi \in \mathbb Z_{n/k}$,
\begin{equation}
\pi \rho_{k \{ n/k \}} = \rho_{k \{n/k\}} \pi = \rho_{k \{n/k\}}.
\end{equation}
Since $\rho_{k\{n/k\}}$ commutes with $\pi_1^{(n/k)}$, they can be simultaneously diagonalized into a basis $\left \{ \ket{\phi_j} \right \}$.  Since $\pi_1^{(n/k)}$ is unitary, its eigenvalues associated to each $\ket{\phi_j}$, can be labeled as $\lambda_j = e^{i \phi_j}$.  We can then examine
\begin{eqnarray}
\pi_{1\vphantom{\{}}^{(n/k)} \rho_{k \{n/k\}}^{\vphantom (} &=& \sum_j p_j^{\vphantom (} \pi_1^{(n/k)} \ketbra{\phi_j}{\phi_j} \\
&=& \sum_j p_j e^{i \phi_j} \ketbra{\phi_j}{\phi_j},
\end{eqnarray}
which, according to equation (9), must be equal to the original $\rho_{k\{n/k\}}$.  This is only possible if $e^{i \phi_j}=1$ for each $j$, implying that $\ket{\phi_j}$ are each CS-states.

Lastly, order the eigenstates to be decreasing in $\mathcal C_1^{(n/k)} \left( \ket{\phi_j} \right)$.  By the convexity of the pairwise concurrence, it then follows that
\begin{eqnarray}
\mathcal C_{1\vphantom{\{}}^{(n/k)} \left( \rho_{k \{n/k\}}^{\vphantom (} \right) &=& \mathcal C_1^{(n/k)} \left( \sum_j p_j \ketbra{\phi_j}{\phi_j} \right) \\
&\leq& \sum_j p_j^{\vphantom (} \mathcal C_1^{(n/k)} \left( \ket{\phi_j} \right) \\
& \leq & \mathcal C_1^{(n/k)} \left( \ket{\phi_1}\right) \\
& \leq & \mathcal C_1^{(n/k)} \left( \ket{\psi_1^{(n/k)}} \right),
\end{eqnarray}
with the inequality being saturated by the state, (2).
\end{proof}
Interestingly, convexity was the only property of the concurrence used in the proof of Theorem 1, meaning that any entanglement measure would obey an analagous statement in CS-states.

Notably, (2) also agrees with the monogamy behavior examined in the next section, as each of $\mathcal{C}_{j \neq k}^{(n)} \left( \ket{\psi_k^{(n)}} \right)=0$.  As a result of Theorem 1, all that remains is to find $\mathcal{C}_1^{(n)}$ for each $n$.  For $n \leq 3$, the CS subspace is equivalent to the totally symmetric one, where the maxima have previously been determined.  This leads to $\max \mathcal{C}_1^{(2)}=1$ with $\ket{\psi_1^{(2)}}= \frac 1 {\sqrt2}( \ket{00} + \ket{11} )$ and $\max \mathcal{C}_1^{(3)}=\frac 23$ with $\ket{\psi_1^{(3)}}= \frac 1 {\sqrt3}( \ket{001} + \ket{010} + \ket{100} )$ \cite{18}.  Turning to the $n\geq4$ case, some notation needs to be established.  Recall the Dicke basis \cite{17} element for totally symmetric states,
\begin{equation}
\ket{S_j^{(N)}} = {\binom nj}^{-1/2} \sum_{\pi \in S_n} \pi | \underbrace{00...0}_{n-j}\underbrace{11...1}_{j} \rangle,
\end{equation}
where the sum runs over all party label permutations.  This naturally extends to a CS basis element in the following manner.  For any particular computational basis element, a CS-state must have the same coefficient for each cyclic permutation of that basis element.  Let a normalized $n$ qubit CS basis element be denoted with an overbrace,
\begin{equation}
\overbrace{\ket{i_1 i_2 ... i_n}} = |\mathbb{Z}_n \ket{i_1 i_2 ... i_n} |^{-\frac 12} \sum_{\pi \in \mathbb{Z}_n} \pi \ket{i_1 i_2 ... i_n},
\end{equation}
where $|\mathbb{Z}_n \ket{i_1 i_2 ... i_n} |$ denotes the cardinality of the orbit of $\ket{i_1 i_2 ... i_n}$ under the action of the $\mathbb Z_n$ cyclic permutation group.  For example, consider the 4 qubit basis element,
\begin{equation}
\overbrace{\ket{0011}} = \frac 1 2 \biggr{[} \ket{0011} + \ket{1001} + \ket{1100}+\ket{0110} \biggr{]}.
\end{equation}

Using this basis notation, an arbitrary 4 qubit CS-state takes the form, 
\begin{equation}
\begin{split}
\ket{\psi^{(4)}} = &a \ket{0000} + b \overbrace{\ket{0001}} + c \overbrace{\ket{0011}}\\+  &d \overbrace{\ket{0101}} + e \overbrace{\ket{0111}} + f \ket{1111},
\end{split}
\end{equation}
where $|a|^2 + |b|^2 +|c|^2+|d|^2+|e|^2+|f|^2 = 1$.  Likewise, an arbitrary 5 qubit CS-state would be
\begin{equation}
\begin{split}
\ket{\psi^{(5)}} = &a \ket{00000} + b \overbrace{\ket{00001}} + c \overbrace{\ket{00011}} +  d \overbrace{\ket{00101}} \\ + &e \overbrace{\ket{00111}} + f \overbrace{\ket{01011}} + g \overbrace{\ket{01111}} + h \ket{11111},
\end{split}
\end{equation}
with the corresponding normalization.  Unfortunately, even calculating $\mathcal{C}_1^{(4)}$ and $\mathcal{C}_1^{(5)}$ for arbitrary states is analytically challenging, let alone maximizing over that space.  Instead, the calculation will be performed on the even-X-state subspaces for $n=4$ and $n=5$.  Even-X-states (abbreviated X-states), introduced in \cite{13}, are superpositions of only computational basis elements containing an even number of `1' entries.  Notably, the set of CS-states examined in \cite{20} are a subset of the CSX-states.  Arbitrary 4 and 5 qubit CSX-states then take the form,
\begin{eqnarray}
\ket{\psi_X^{(4)}} = a \ket{0000} + c \overbrace{\ket{0011}}+  d \overbrace{\ket{0101}} + f \ket{1111}, \quad \, \, \, \, \\
\ket{\psi_X^{(5)}} = a \ket{00000}  + c \overbrace{\ket{00011}} +  d \overbrace{\ket{00101}} + g \overbrace{\ket{01111}}. 
\end{eqnarray}
The X-state subspace is a useful one as concurrence calculations on the space are rather simple.  Two qubit reduced density matrices of X-states were shown in \cite{13} to be of the form
\begin{equation}
\rho = \left(\begin{array}{cccc}\alpha & 0 & 0 & \nu \\0 & \beta & \mu & 0 \\0 & \mu^* & \gamma & 0 \\\nu^* & 0 & 0 & \delta\end{array}\right).
\end{equation}
The square roots of the eigenvalues of $\rho \tilde{\rho}$ (as in the concurrence definition) \cite{9} are the following,
\begin{equation}
\lambda_i = \left \{ \sqrt{\beta \gamma} + |\mu |, \, \sqrt{\beta \gamma} - |\mu | , \, \sqrt{\alpha \delta} + |\nu | , \, \sqrt{\alpha \delta} - |\nu | \right \}.
\end{equation}
Either the first or third term is the largest eigenvalue so the X-state concurrence is then
\begin{equation}
\mathcal{C}(\ket{\psi_X}) = 2\max \left \{0,|\nu|-\sqrt{\beta \gamma},|\mu|-\sqrt{\alpha \delta} \right \}.
\end{equation}
Let $\mathcal{C}_{k,\mu}^{(n)}$ and $\mathcal{C}_{k,\nu}^{(n)}$ indicate the possible non-zero expressions for CSX concurrence involving $\mu$ and $\nu$ respectively.  Following this notation, the concurrences of arbitrary 4 and 5-qubit CSX-states can be calculated to be,
\begin{widetext}
\begin{eqnarray}
\mathcal{C}_{1,\mu}^{(4)} &=&\frac{|cd^*+dc^*|}{ \sqrt{2}} - 2\sqrt{ \left( |a|^2 + \frac{|c|^2}{4} \right) \left( \frac{ |c|^2}{4}+|f|^2 \right)} \\
\mathcal{C}_{1,\nu}^{(4)} &=& |ac^*+cf^*| - \frac{1}{2}|c|^2-|d|^2 \\ 
\mathcal{C}_{2,\mu}^{(4)} &=& |c|^2 - 2\sqrt{ \left( |a|^2+\frac{|d|^2}{2} \right) \left (\frac{|d|^2}{2}+|f|^2 \right)} \\
\mathcal{C}_{2,\nu}^{(4)} &=&\sqrt{2} |ad^*+df^*| - |c|^2 \\
\mathcal{C}_{1,\mu}^{(5)} &=& \frac 25\left(\left|dc^*+cd^*\right|+|d|^2+|g|^2 - \sqrt{\left(5|a|^2+2|c|^2+|d|^2 \right) \left( |c|^2+3|g|^2\right)}\right) \\
\mathcal{C}_{1,\nu}^{(5)} &=&\frac 25\left(\left| \sqrt{5}ac^*+2cg^* + dg^* \right| - |c|^2-2|d|^2-|g|^2 \right) \\
\mathcal{C}_{2,\mu}^{(5)} &=&\frac 25 \left(\left|dc^*+cd^*\right|+|c|^2+|g|^2 - \sqrt{\left(5|a|^2+|c|^2+2|d|^2 \right) \left( |d|^2+3|g|^2\right)}\right) \\
\mathcal{C}_{2,\nu}^{(5)} &=&\frac 25 \left(\left| \sqrt{5}ad^*+2dg^* + cg^* \right| - 2|c|^2-|d|^2-|g|^2\right).
\end{eqnarray}
\end{widetext}
In determining the maximum of $\mathcal{C}_1^{(4)}$ and $\mathcal{C}_1^{(5)}$ over the X-state subspace, the maximization will need to be performed over both the $\mu$ and $\nu$ terms, with the overall maximum being the larger of the two resulting maxima.  These maximizations are easily performed after setting all the coefficient phases equal to 0.  This phase treatment maximizes each absolute value in equations (13)-(20) and simplifies the maximizations enough to readily calculate.  The results are compiled in the table below.
\begin{table}[h]
\begin{center}
\renewcommand{\arraystretch}{2}
\begin{tabular}{| c | c |}
\hline
Concurrence & Maximum \\ \hline \hline
$\mathcal{C}_{1,\mu}^{(4)}$ & $\frac 14$  \\  \hline
$\mathcal{C}_{1,\nu}^{(4)}$ & $\frac 12$  \\  \hline
$\mathcal{C}_{1,\mu}^{(5)}$ & $\approx 0.468$  \\  \hline
$\mathcal{C}_{1,\nu}^{(5)}$ & $\approx 0.366$  \\  \hline
\end{tabular}
\caption{Maximum concurrences of 4 and 5 qubit CSX-states.  The analytic results for $n=5$ are the roots of complicated polynomials, so their rounded numerical values are reported instead.}
\end{center}
\label{default}
\end{table}
The overall maximum of $\mathcal{C}_{1}^{(4)}=\frac 12$ occurs when $d=0$ and $a=c=f=\frac 1 {\sqrt3}$, while the $\mathcal{C}_{1}^{(5)}\approx 0.468$ maximum occurs at  $a=g=0$ and $c \approx 0.298$ $d \approx 0.955$.  These maxima, while calculated only over the CSX subspace, agree with the apparent maxima in numerical results for general CS-states as shown in Figure 3 in the next section.  This $\mathcal C_1^{(5)}$ maximum is also a notable improvement over the lower bound established in \cite{20}.

For $n>5$, the CSX-state concurrences can be calculated, but the spaces prove too large and complicated to maximize over analytically.

\section{Constraints on Shared Entanglement}

The space of allowable pairwise concurrences, $\{ \mathcal{C}_{i,j} \}$ with $i$ from 1 to $n-1$ and $j>i$, for a general $n$ qubit state is known to be constrained by monogamy relations \cite{12}.  The pairs of $\{ \mathcal{C}_k^{(n)} \}$ for CS-states obey constraints of a similar nature.  Shown in Figure 1 are the $k=1$ and $k=2$ concurrences for $10^5$ randomly generated 4 and 5 qubit CS-states.  Note that the 5 qubit concurrence space is symmetric due to the permutation properties discussed in the introduction.
\begin{widetext}
\begin{center}
\begin{figure}[h]
\centering
\includegraphics[width = 150mm]{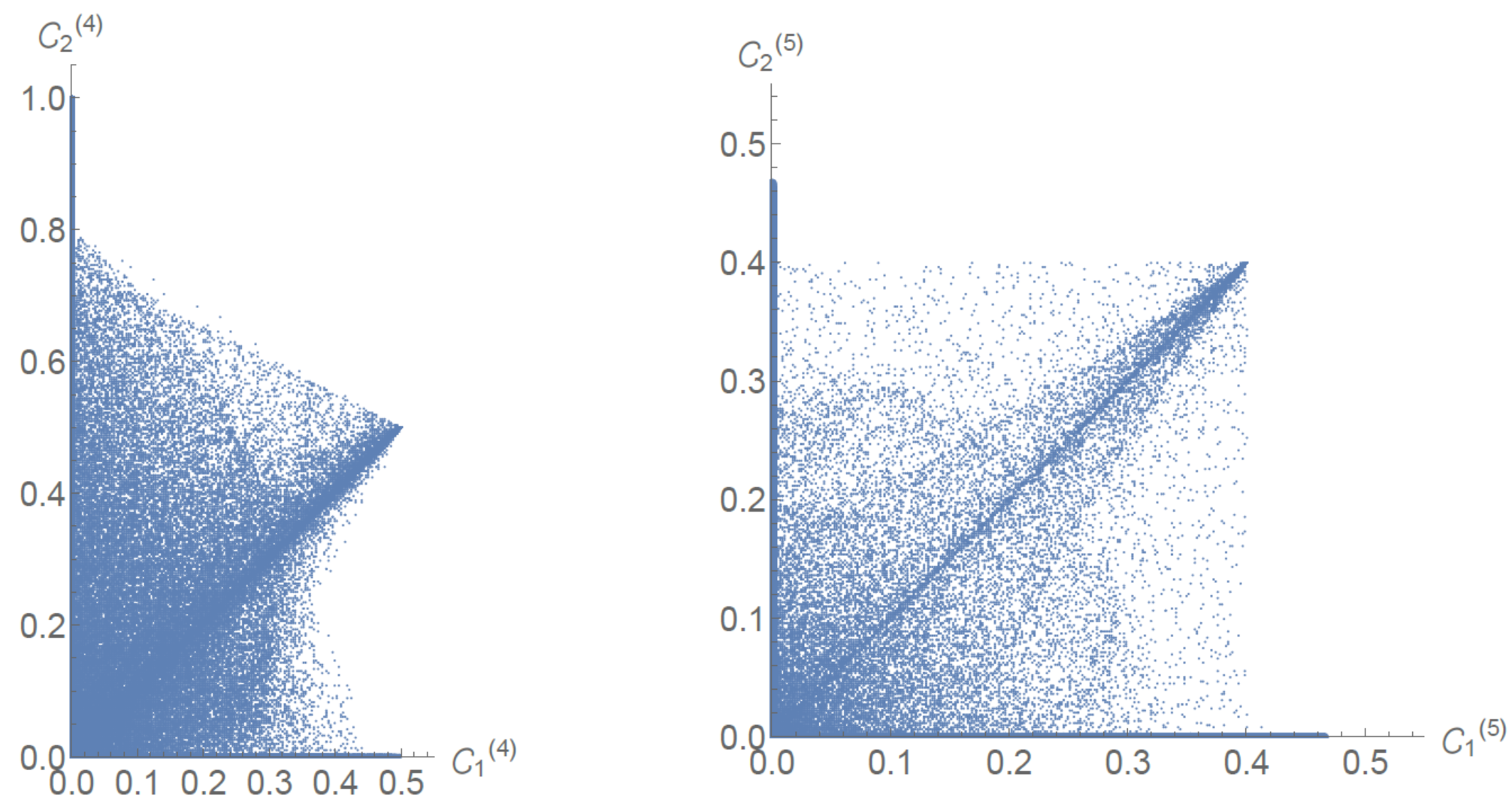}
\caption{ Pairwise concurrences of $10^5$ randomly generated 4 and 5 qubit CS states.}
\end{figure}
\end{center}
\end{widetext}
\noindent This first examination demonstrates the peculiar monogamous relationship between pairwise concurrences in CS-states.  It appears that for both $n=4$ and $n=5$, above some threshold concurrence, the other concurrence must be equal to 0.  This is differs from typical monogamy relations \cite{12}\cite{14}, which also suggest that the maximally entangled states minimize entanglement with other parties, but that states with slightly less entanglement than the maximum may share other entanglements.

The following theorem provides some analytical context to the CS-state monogamy.
\begin{thm}
The neighborhood of states around any $\ket{\psi_2^{(4)}}$ have $\mathcal{C}_1^{(4)}=0$.
\end{thm}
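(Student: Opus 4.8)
The plan is to identify the two–qubit reduced state on an adjacent pair at the maximizer, observe that it is the maximally mixed state, and then push the vanishing of the concurrence to an open set by continuity. First I would pin that reduced state down. By Theorem~1 with $k=2$, $\max\mathcal C_2^{(4)}=\max\mathcal C_1^{(2)}=1$, so any $\ket{\psi_2^{(4)}}$ has $\mathcal C_2^{(4)}=1$; hence qubits $0$ and $2$ form a maximally entangled, in particular pure, two–qubit state $\ket{\Phi}_{0,2}$. Since $\ket{\psi_2^{(4)}}$ is globally pure, a pure marginal forces the product structure $\ket{\psi_2^{(4)}}=\ket{\Phi}_{0,2}\otimes\ket{\chi}_{1,3}$, so tracing out qubits $2$ and $3$ gives $\rho_{01}=\rho_0\otimes\rho_1$ with $\rho_0=\mathrm{Tr}_2\ketbra{\Phi}{\Phi}=\tfrac12 I_2$ by maximal entanglement of $\ket{\Phi}$. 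Cyclic symmetry makes all one–qubit marginals equal, so $\rho_1=\rho_0=\tfrac12 I_2$, and therefore $\rho_{01}=\tfrac14 I_4$, the same for every maximizer.

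Next I would invoke the Wootters concurrence formula \cite{9}, writing $\mathcal C_1^{(4)}(\psi)=\max\{0,\Lambda(\psi)\}$ with $\Lambda(\psi)=\lambda_1-\lambda_2-\lambda_3-\lambda_4$, the $\lambda_i$ being the decreasingly ordered square roots of the eigenvalues of $\rho_{01}(\psi)\,\tilde\rho_{01}(\psi)$. At $\ket{\psi_2^{(4)}}$ one has $\rho_{01}\tilde\rho_{01}=\tfrac1{16}I_4$, so $\lambda_1=\lambda_2=\lambda_3=\lambda_4=\tfrac14$ and $\Lambda=-\tfrac12$. The assignments $\psi\mapsto\rho_{01}(\psi)=\mathrm{Tr}_{23}\ketbra{\psi}{\psi}$ and $\rho\mapsto\rho\tilde\rho$ are polynomial hence continuous, and the ordered square roots of the eigenvalues of a matrix depend continuously on the matrix, so $\Lambda$ is continuous. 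Since $\Lambda(\ket{\psi_2^{(4)}})=-\tfrac12<0$, there is an open neighborhood on which $\Lambda<0$, i.e.\ $\mathcal C_1^{(4)}=0$; this holds whether the neighborhood is taken within the CS subspace or in the full Hilbert space.

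The point that needs care, and the main obstacle, is that $\Lambda<0$ must be \emph{strict} at the maximizer: this is exactly what $\rho_{01}=\tfrac14 I_4$ secures, with the comfortable margin $\Lambda=-\tfrac12$. Equivalently, the maximally mixed two–qubit state lies strictly inside the separable set (it is surrounded by a ball of separable states), and continuity of $\psi\mapsto\rho_{01}$ pulls that ball back to an open set of states with $\mathcal C_1^{(4)}=0$; were $\rho_{01}$ only on the boundary of the separable set, continuity alone would not suffice. I would also check explicitly that the factorization and the marginal–equality step invoke nothing particular to a single maximizer — note that not every maximizer of $\mathcal C_2^{(4)}$ lies in the CSX subspace, so the general argument, rather than the X–state formulas, is what is needed — so that ``any $\ket{\psi_2^{(4)}}$'' is genuinely covered.
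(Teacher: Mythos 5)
Your proof is correct and rests on the same central observation as the paper's: at any maximizer of $\mathcal C_2^{(4)}$ the adjacent two-qubit marginal is $\tfrac14 I_4$, so the Wootters sum $\Lambda=\lambda_1-\lambda_2-\lambda_3-\lambda_4$ equals $-\tfrac12$, strictly negative. Where you differ is in the two surrounding steps, and in both cases your route is arguably tighter. First, the paper identifies the set of maximizers by asserting that all concurrence-one two-qubit pure states are equivalent under local unitaries, hence that every $\ket{\psi_2^{(4)}}$ is LU-equivalent to the explicit product of Bell pairs; you instead derive the product structure from the fact that a two-qubit state with concurrence $1$ must be pure and maximally entangled, which pins down $\rho_{01}=\tfrac14 I_4$ directly. (That purity claim deserves one line of justification --- e.g.\ $C\le\lambda_1\le\sqrt{\operatorname{Tr}(\rho\tilde\rho)}\le\sqrt{\operatorname{Tr}(\rho^2)}$, so $C=1$ forces $\operatorname{Tr}(\rho^2)=1$ --- but it is standard, and the paper's own LU-equivalence assertion is left at a comparable level of detail.) Second, the paper closes by an explicit first-order expansion of $\rho_r\tilde\rho_r$ in the perturbation parameter $\epsilon$, restricted to perturbations drawn from the CS subspace; you replace this with continuity of $\Lambda$ as a function of the state, which avoids the matrix computation and yields the slightly stronger conclusion that the neighborhood may be taken in the full Hilbert space rather than only within the CS subspace. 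Both arguments hinge on exactly the same strictness margin $\Lambda=-\tfrac12$, which you correctly identify as the crux of why continuity (or a first-order expansion) suffices.
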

\begin{proof}
Consider the state,
\begin{equation}
\ket{\psi_2^{(4)}} = \ket{\psi_1^{(2)}}_{1,3} \otimes \ket{\psi_1^{(2)}}_{2,4},
\end{equation}
The pure 2 qubit states with concurrence equal to 1 are equivalent to each other under local unitaries, so the set of ${\ket{\psi_2^{(4)}}}$ are likewise equivalent.  This implies that the entanglement properties of any $\ket{\psi_2^{(4)}}$ can be determined by exmining those of (34).  Now consider altering (34) by some infinitesimal perturbation of the form of (19),
\begin{equation}
\ket{\psi '} = \ket{\psi_2^{(4)}} + \epsilon \ket{\psi^{(4)}},
\end{equation}
where $\epsilon \ll1$.  To show $\mathcal{C}_{1}^{(4)}=0$ for the above state regardless of the perturbation, we first calculate the reduced density matrix between adjacent parties,
\begin{equation}
\rho_r = \frac {\mathbb{1}}4 + \frac {\epsilon}2  \Re \left [ \left(\begin{array}{cccc}2a & b & b & c \\ b & \sqrt{2}d & c & e \\b & c & \sqrt{2}d & e \\ c & e & e & 2 f\end{array}\right) \right] + \mathcal O (\epsilon^2).
\end{equation}
It's clear that only the real part of the perturbation will affect the concurrence, so continue assuming the coefficients of the perturbation are real.  For simplicity, absorb $\epsilon$ into the perturbation coefficients.  Continuing in the concurrence calculation,
\begin{equation}
\rho_r \tilde{\rho_r} = \frac {\mathbb1} {16} + \frac 18 \left(\begin{array}{cccc}2a+2f & b-e & b-e & 2c \\ b-e & 2\sqrt{2}d & 2c & e-b \\b-e & 2c & 2\sqrt{2}d & e-b \\ 2c & e-b & e-b & 2a+2 f\end{array}\right) + \mathcal O (\epsilon^2).
\end{equation}
The square roots of the eigenvalues of this matrix are all $\lambda_i = \frac 14 \sqrt{1 + \mathcal O (\epsilon) + \mathcal O (\epsilon^2)}$.  Therefore, the sum $\lambda_1-\lambda_2-\lambda_3-\lambda_4$ will certainly be negative, so the concurrence is 0.
\end{proof}

The monogamy of CS-states is more clearly observed by examining the subconcurrence, defined as
\begin{equation}
s\mathcal{C} = \lambda_1-\lambda_2-\lambda_3-\lambda_4,
\end{equation}
where $\lambda_i$ are the square roots of the eigenvalues of $\rho \tilde{\rho}$ in descending magnitude, as in the concurrence definition.  More simply, the subconcurrence has the same definition as the concurrence, except it doesn't map negative sums of $\lambda_i$ to 0.  The subconcurrences of randomly generated 4 and 5 qubit CS-states are displayed in Figure 2.
\begin{widetext}
\begin{center}
\begin{figure}[h]
\centering
\includegraphics[width = 150mm]{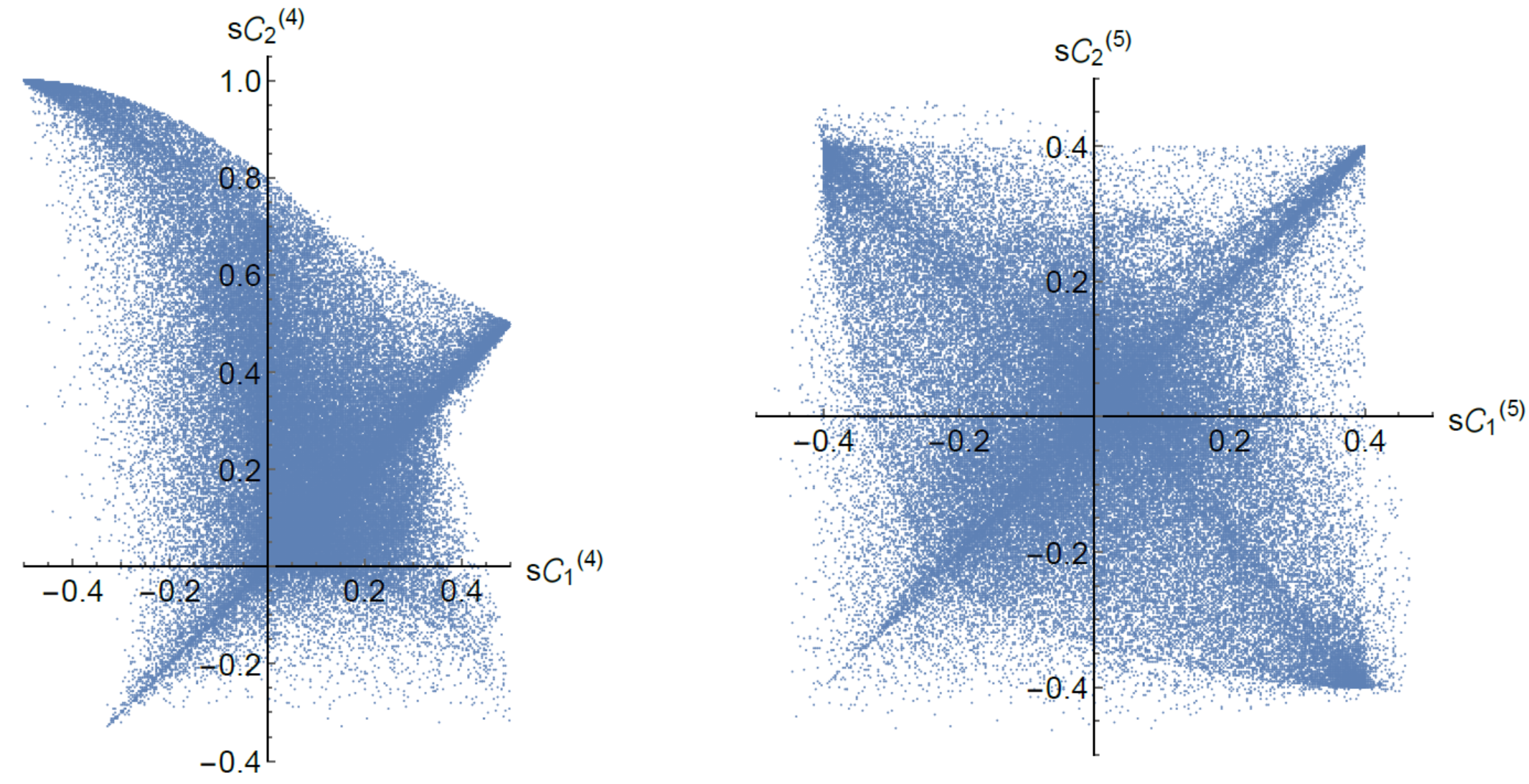}
\caption{Pairwise subconcurrences of $10^5$ randomly generated 4 and 5 qubit CS states.}
\end{figure}
\end{center}
\end{widetext}

Figure 2 clearly demonstrates the apparent thresholds in 4 and 5 qubits.  For both $n=4$ and $n=5$, it appears that above some $k=2$ subconcurrence, the $k=1$ subconcurrence must be negative.  Due to the symmetry discussed in the introduction, in 5-qubits, states with $k=1$ subconcurrences above the same threshold will have negative $k=2$ subconcurrence.  For $n=4$, however, the totally symmetric state, $\ket{W} = \overbrace{\ket{0001}}$ has the same $s\mathcal{C}_1^{(4)}$ as (34) while also having $s\mathcal{C}_2^{(4)}=\frac 12$.

The analytic description of these monogamy thresholds will again be performed on the X-state subspace, where the calculations are much simpler.  Shown in Figure 3 are the subconcurrences of randomly generated CSX-states overlaid on general CS-state subconcurrences.  Based on these numeric results, it is apparent that CSX-states share the same monogamy thresholds and maximum concurrences as CS-states, making them a relevant subset for analysis.
\begin{widetext}
\begin{center}
\begin{figure}[h]
\centering
\includegraphics[width = 150mm]{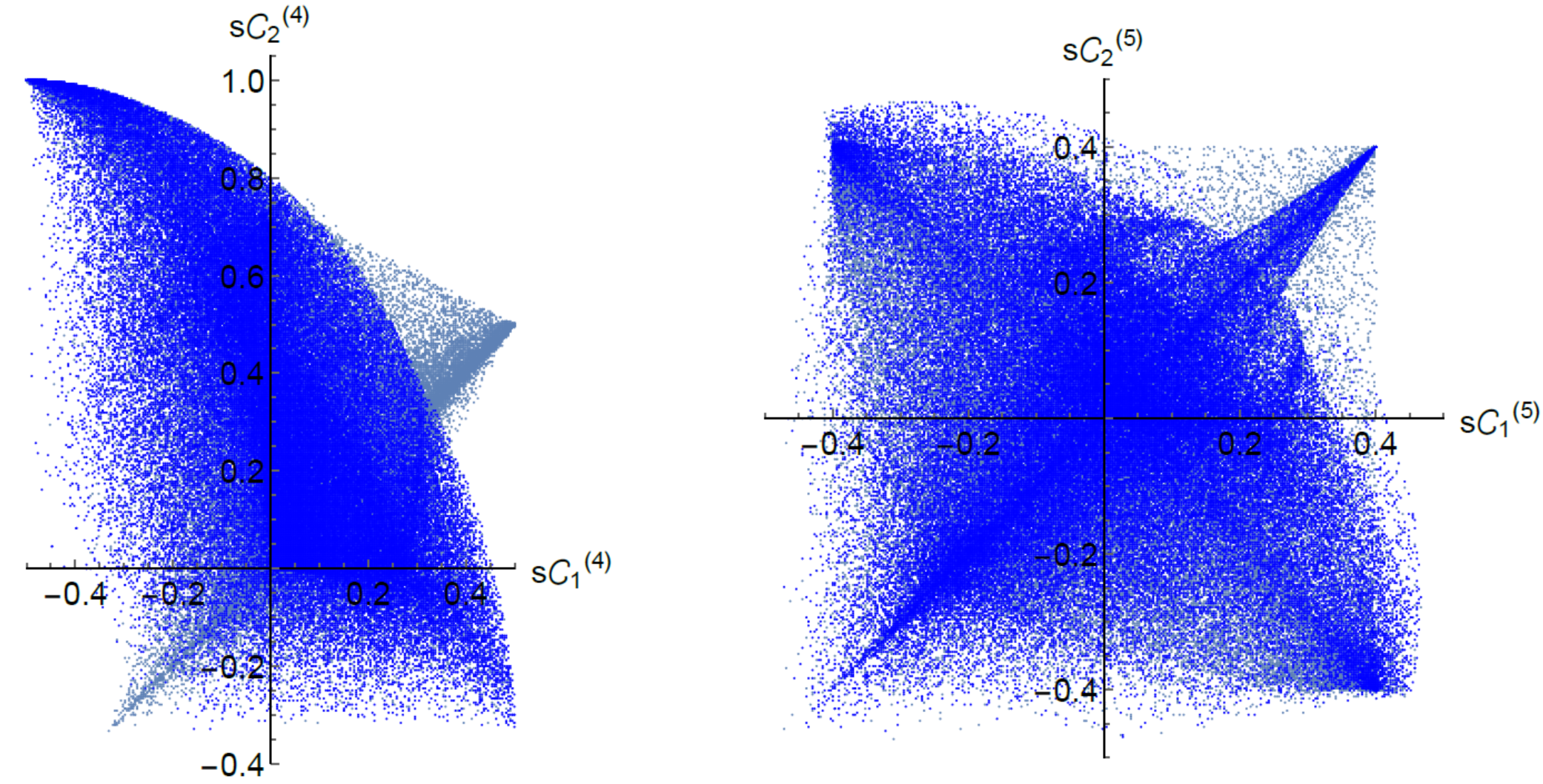}
\caption{Pairwise subconcurrences of $10^5$ randomly generated 4 and 5 qubit CS (blue) and CSX (darker blue) states.}
\end{figure}
\end{center}
\end{widetext}

Looking only at CSX-states, we found the acheivable concurrence boundaries in both 4 and 5 qubits.  The full analysis is presented in the appendix, but the boundaries allow for a quick determination of the concurrence thresholds in the X state subspace.  The thresholds are compiled in Table II on the next page.
\begin{table}[h]
\begin{center}
\renewcommand{\arraystretch}{2}
\begin{tabular}{| c | c |}
\hline
Concurrence & Threshold \\ \hline \hline
$s\mathcal{C}_{1}^{(4)}$ & $\frac {2\sqrt 2 -1}4$  \\  \hline
$s\mathcal{C}_{2}^{(4)}$ & $\frac 45$  \\  \hline
$s\mathcal{C}_{k}^{(5)}$ & $ \approx 0.418$  \\  \hline
\end{tabular}
\caption{Threshold concurrences of 4 and 5 qubit CSX-states.  The analytic result for $n=5$ is the root of a complicated polynomial, so the rounded numerical value is reported instead.}
\end{center}
\label{default}
\end{table}
Note that the $s\mathcal{C}_{1}^{(4)}$ threshold only fully holds for CSX-states.  Also recall that the concurrence symmetry in 5 qubits implies that $s\mathcal{C}_{1}^{(5)}$ and $s\mathcal{C}_{2}^{(5)}$ have the same threshold.

\section{Discussion}

In the search for maximally entangled state in $n$ qubit CS-states we have provided a state construction which reduces the problem to finding the states which maximize the concurrence between adjacent parties.  Adjacent maxima are well understood in 2 and 3 qubits, and we have calculated the maximum for 4 and 5 qubits in the X-state subspace.  Brute force calculations are obviously difficult in larger $n$, even in the X-state subspace.  The development of a generalized basis for large $n$ qubit CS-states, similar to the Dicke basis for totally symmetric states, would possibly enable more general statements without quite as much raw calculation.  In addition, a canonical form resulting from local unitaries which leaves the state in some simpler, yet still cyclically symmetric, state would aid in calculation.  Presently, no such canonical form is known for CS-states.

The work of this paper would be interesting and simple enough to repeat with alternate pairwise entanglement measures, such as the Negativity.  In particular, Theorem 1 would still hold for the Negativity and would make conclusions about adjacent entanglement equally generalizable.  It would also be simple enough to extend Theorem 1 and the other entanglement permutation relations to non-pairwise measures such as the 3-tangle or bipartite entanglement between bipartitions of the parties in the overall state.

This work was supported, in part, by NSF grant PHY-1620846.

\appendix

\section{Appendix:  X State Achievable Subconcurrence Boundaries}

To find the boundary of CSX-state subconcurrences, the boundaries of each of the pairs $\left( s\mathcal C^{(n)}_{1,\mu(\nu)}, s\mathcal C^{(n)}_{2,\mu(\nu)} \right)$ need be found, with the overall boundary being a combination of the outermost boundaries from each pairing due to $s \mathcal C_k^{(n)} = \max \left \{ s \mathcal C_ {k,\mu} ^{(n)} , s \mathcal C_ {k,\nu} ^{(n)} \right \}$.

To simplify the search for the boundaries, note that for any 4 or 5 qubit CSX-state, the subconcurrence terms (26-33) are strictly increased by setting the coefficient phases to 0.  This implies that the boundaries can be searched for among 4 and 5 qubit CSX states with purely real coefficients.

\subsection{4 Qubits}

Consider an arbitrary 4 qubit CSX state, (21), with real coefficients.  The corresponding normalized state
\begin{equation}
\ket{\bar{\psi}}=\frac{1}{\sqrt{a^2+c^2+f^2}} \left( a \ket{0000} + c \overbrace{\ket{1100}} + f \ket{1111} \right),
\end{equation}
has both larger or equal $s\mathcal C^{(4)}_{1,\nu}$ and larger or equal $s\mathcal C^{(4)}_{2,\mu}$.  To show this, consider either subconcurrence, $s\mathcal{C}$, for which it is then true that
\begin{equation}
s\mathcal{C}\left(\ket{\bar{\psi}} \right) = s\mathcal{C}\left(\ket{\psi}  \right) \biggr{|}_{d=0} \geq s\mathcal{C}\left(\ket{\psi}   \right).
\end{equation}
All of which implies that the boundary of the $\left( s\mathcal C^{(4)}_{1,\nu}, s\mathcal C^{(4)}_{2,\mu} \right)$ pairs can be looked for among states with $d=0$.  Likewise the state
\begin{equation}
\ket{\bar {\psi}} = \sqrt {\frac {a^2+f^2}{2}} \left( \ket{0000} + \ket{1111} \right) + c \overbrace{\ket{1100}} + d \overbrace{\ket{1010}}
\end{equation}
has larger or equal $s\mathcal{C}_{1,\mu}^{(4)}$, $s\mathcal{C}_{1,\nu}^{(4)}$, and $s\mathcal{C}_{2,\nu}^{(4)}$, meaning the boundaries of the  $\left (s\mathcal{C}_{1,\mu}^{(4)}, s\mathcal{C}_{2,\nu}^{(4)} \right)$ and $\left (s\mathcal{C}_{1,\nu}^{(4)}, s\mathcal{C}_{2,\nu}^{(4)} \right)$ pairs can be found among states where $a=f$.  And lastly the state
\begin{equation}
\ket{\bar {\psi}} = \frac{1}{\sqrt{c^2-d^2}} \left( c \overbrace{\ket{1100}} + d \overbrace{\ket{1010}} \right)
\end{equation}
has larger or equal $s\mathcal{C}_{1,\mu}^{(4)}$ and $s\mathcal{C}_{2,\mu}^{(4)}$, so the boundary of the $\left( s\mathcal{C}_{1,\mu}^{(4)}, s\mathcal{C}_{2,\mu}^{(4)}\right)$ pairs can be found among states where $a=f=0$.

Using these simplified states, the remaining coefficients can be expressed using the following spherical parametrizations,
\begin{eqnarray}
\left \{ a,c,f \right \} &\rightarrow& \left \{ \sin \theta \cos \phi , \cos \theta, \sin \theta \sin \phi \right \} \\ 
\left \{ a,c,d \right \} &\rightarrow& \left \{ \cos \alpha, \sin \alpha \cos \beta, \sin \alpha \sin \beta \right \} \\
\left \{ c,d \right \} &\rightarrow& \left \{ \cos \zeta , \sin \zeta \right \},
\end{eqnarray}
associated with the $\left( s\mathcal C^{(4)}_{1,\nu}, s\mathcal C^{(4)}_{2,\mu} \right)$, $\left( s\mathcal C^{(4)}_{1,\mu (\nu) }, s\mathcal C^{(4)}_{2,\nu} \right)$, and $\left( s\mathcal C^{(4)}_{1,\mu}, s\mathcal C^{(4)}_{2,\mu} \right)$ pairs respectively, where $\left \{ \theta, \phi, \alpha, \beta, \zeta \right \} \in [0,\pi/2]$.  In these parametrizations, we can define the maps,
\begin{eqnarray}
\mathcal{C}_{\nu,\mu} &:& \left \{ \theta, \phi \right \} \rightarrow \left \{ s\mathcal C_{1,\nu}^{(4)} , s\mathcal C_{2,\mu}^{(4)} \right \} \\
\mathcal{C}_{\mu (\nu),\nu} &:& \left \{ \alpha, \beta \right \} \rightarrow \left \{ s\mathcal C_{1,\mu(\nu)}^{(4)} , s\mathcal C_{2,\nu}^{(4)} \right \} \\
\mathcal{C}_{\mu,\mu} &:& \zeta \rightarrow \left \{ s\mathcal C_{1,\mu}^{(4)} ,s \mathcal C_{2,\mu}^{(4)} \right \}
\end{eqnarray}
according to the expressions (26-29).  The boundaries of the images of these maps correspond to the boundaries of the domains, as well as the zeroes of the determinant of the Jacobians for each map.  The result of these boundary determinations leaves the following two outermost boundaries,
\begin{widetext}
\begin{equation}
s\mathcal{C}_{2,X}^{(4)} \leq \begin{cases}
    \frac 25 \left( 8 \sqrt{1-2s\mathcal{C}_{1,X}^{(4)}-4\left(s\mathcal{C}_{1,X}^{(4)}\right)^2} - s\mathcal{C}_{1,X}^{(4)} +1 \right) , & -\frac 12 \leq s\mathcal{C}_{1,X}^{(4)} \leq \frac{63}{226}\\
    \frac 19 \left( 8 \sqrt{1-s\mathcal{C}_{1,X}^{(4)}-2\left(s\mathcal{C}_{1,X}^{(4)}\right)^2} - 4s\mathcal{C}_{1,X}^{(4)} -1 \right), & \frac{63}{226} \leq s\mathcal{C}_{1,X}^{(4)} \leq \frac 12,
  \end{cases}
\end{equation}
\end{widetext}
which came from the $\left( s\mathcal C^{(4)}_{1,\nu}, s\mathcal C^{(4)}_{2,\mu} \right)$ pairs.  These boundaries are displayed in Figure 4.
\begin{center}
\begin{figure}[h]
\centering
\includegraphics[width = 85mm]{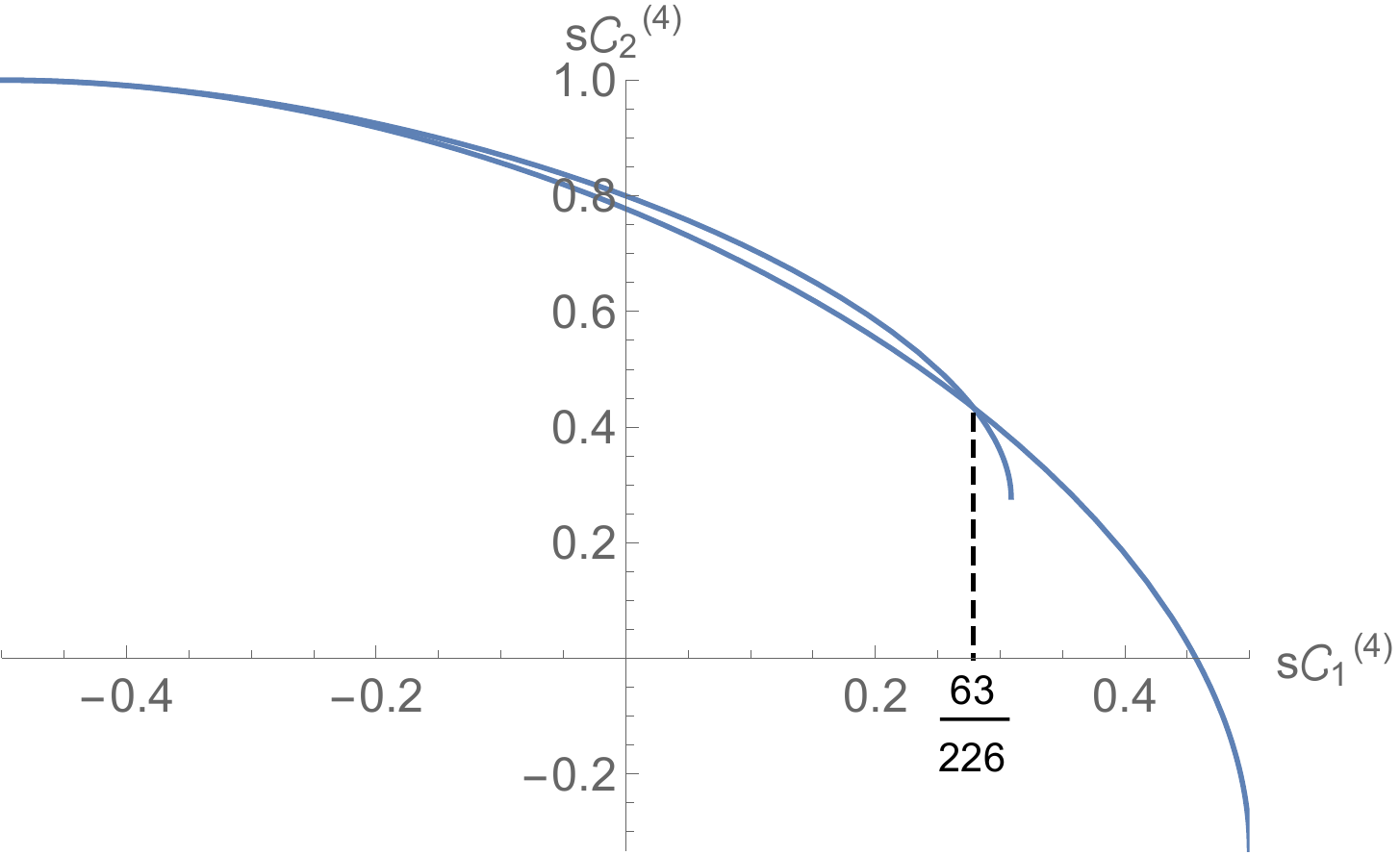}
\caption{The 4 qubit CSX state subconcurrence boundaries.}
\end{figure}
\end{center}

\subsection{5 Qubits}

Following the methods from the previous section, start by considering an arbitrary 5 qubit CSX state, (22), with real coefficients.  The corresponding normalized state,
\begin{equation}
\ket{\bar{\psi}} = \frac{1}{\sqrt{c^2+d^2+g^2}} \left( c \overbrace{\ket{00011}} + d \overbrace{\ket{00101}} + g \overbrace{\ket{01111}} \right),
\end{equation}
has larger or equal $s\mathcal{C}_{1,\mu}^{(5)}$, and $s\mathcal{C}_{2,\mu}^{(5)}$, so therefore the boundary of the $\left( s\mathcal{C}_{1,\mu}^{(5)}, s\mathcal{C}_{2,\mu}^{(5)}\right) $ pairs can be searched for among states with $a=0$.  For the other pairs, we will bound their subconcurrences by a sequence of lines which lie within the $\left( s\mathcal{C}_{1,\mu}^{(5)}, s\mathcal{C}_{2,\mu}^{(5)}\right) $ boundary.

We can now parametrize the remaining coefficients of (50) as
\begin{equation}
\left \{c,d,g \right \} \rightarrow \left \{ \sin \theta \cos \phi , \sin \theta \sin \phi, \cos \theta \right \},
\end{equation}
and define the map
\begin{equation}
\mathcal C _{\mu, \mu}: \left \{ \theta , \phi \right \} \rightarrow \left \{ s\mathcal C_{1,\mu}^{(5)},s\mathcal C_{2,\mu}^{(5)} \right \},
\end{equation}
according to (30) and (32).  By analyzing the boundaries of the domain and the zeroes of the determinant of the Jacobian of this map, three boundaries make up a maximal set, as plotted in Figure 5.  These three boundaries are parametrized by $\theta=\frac{\pi}2$, $\phi=0$, and $\phi=\frac{\pi}2$.  The exact polynomials in $s\mathcal C_{1,X}^{(5)}$ and $s\mathcal C_{2,X}^{(5)}$ which describe these boundaries are easily determined by a Gr{\"o}bner basis calculation performed on (52), but the results are quite lengthy.
\begin{center}
\begin{figure}[h]
\centering
\includegraphics[width = 85mm]{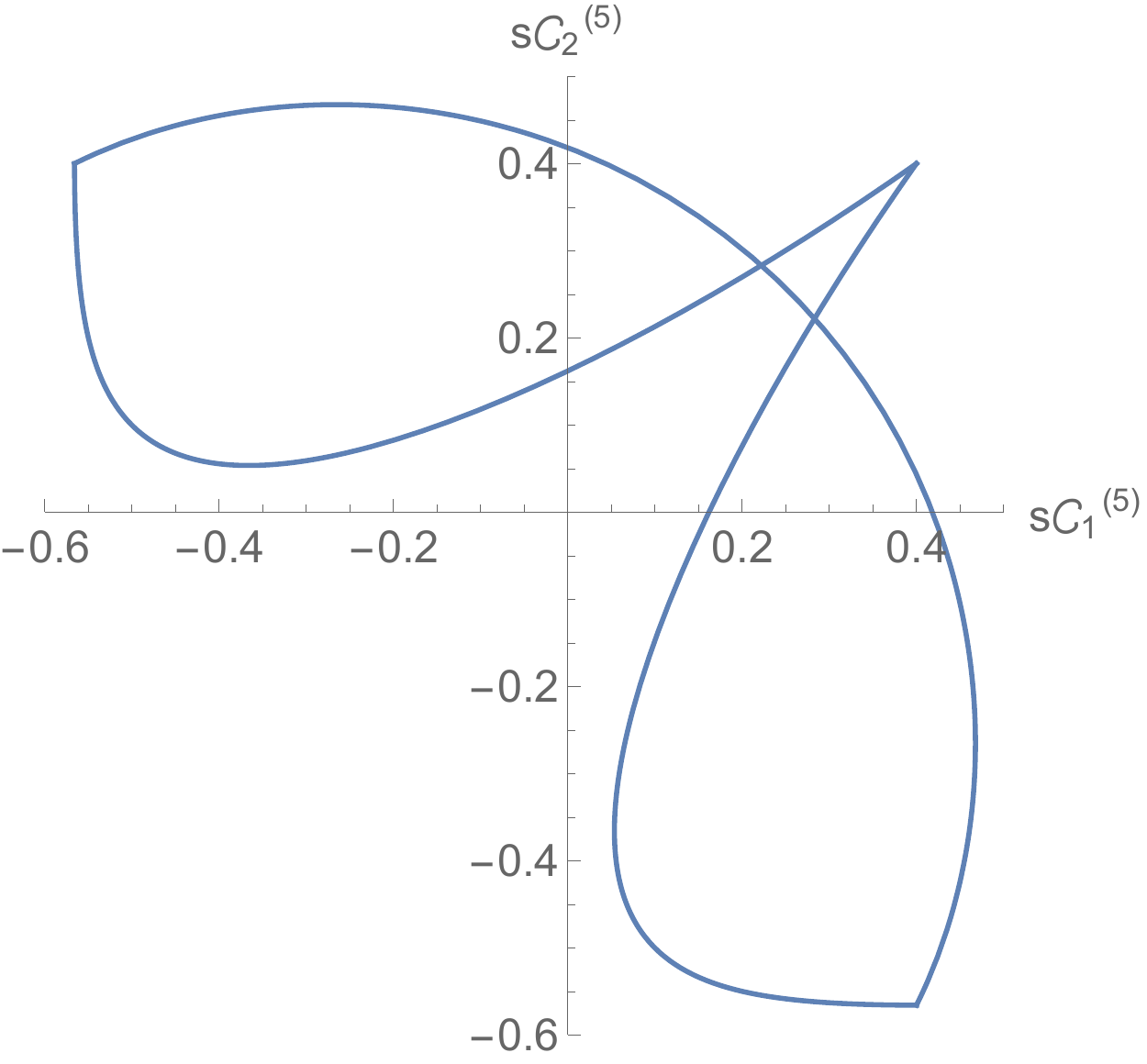}
\caption{The 5 qubit CSX state subconcurrence boundaries.}
\end{figure}
\end{center}

Turning now to the remaining subconcurrence pairings.  It was shown in Table 1 that $\mathcal{C}_{1(2),\nu}^{(5)} \leq 0.366$.  Another simple maximization shows that $s\mathcal{C}_{1,\nu}^{(5)} + s\mathcal{C}_{2,\nu}^{(5)} \leq \frac 25$.  These three conditions bound the $\left( s\mathcal{C}_{1,\nu}^{(5)}, s\mathcal{C}_{2,\nu}^{(5)}\right) $ pairs to a region well within the previous boundary, as shown in Figure 6.
\begin{center}
\begin{figure}[h]
\centering
\includegraphics[width = 85mm]{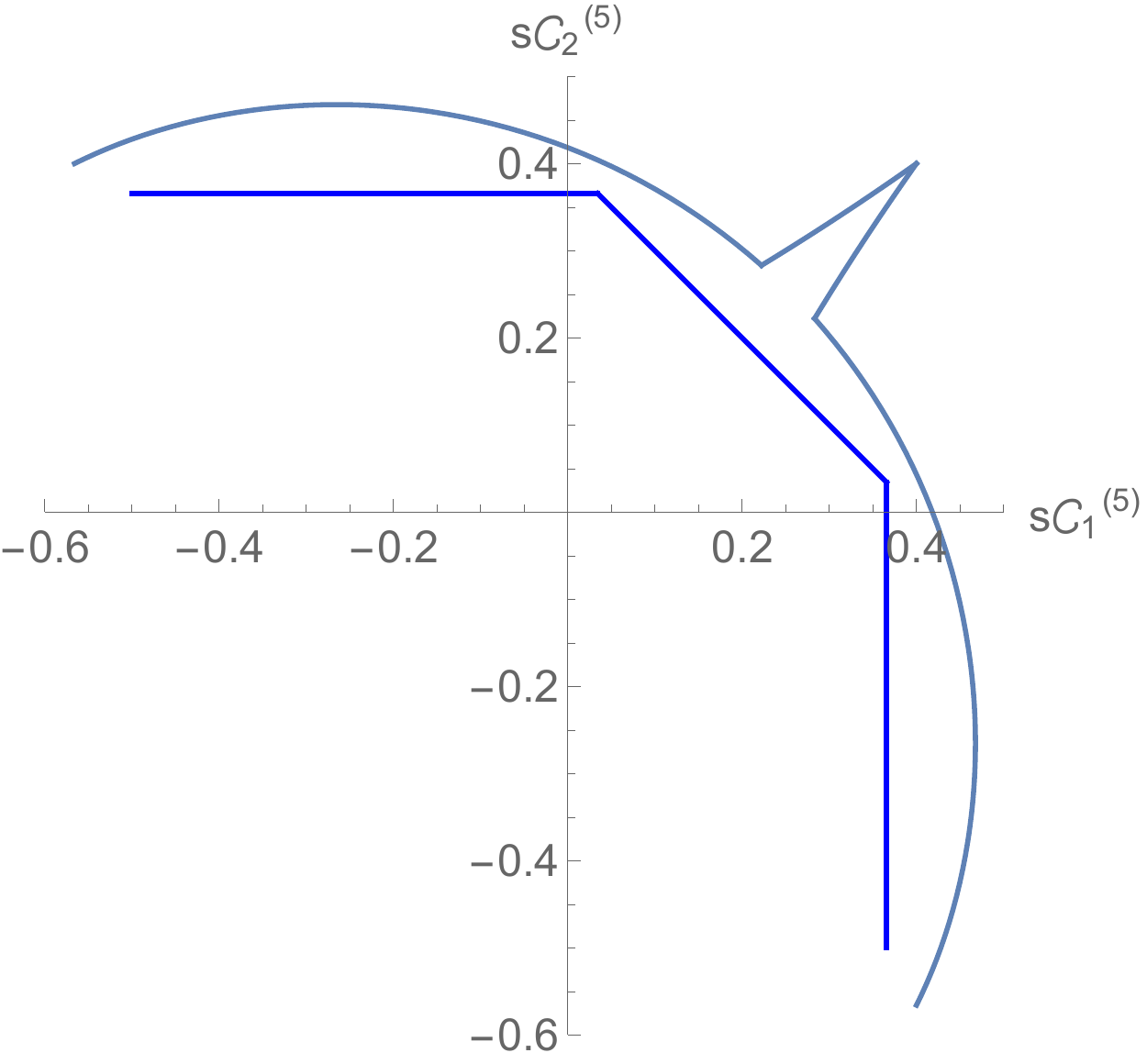}
\caption{The bounding conditions on the $\left( s\mathcal{C}_{1,\nu}^{(5)}, s\mathcal{C}_{2,\nu}^{(5)}\right) $ pairs (darker blue) with the overall boundary .}
\end{figure}
\end{center}

Lastly, the remaining two pairs, $\left( s\mathcal{C}_{1,\mu (\nu)}^{(5)}, s\mathcal{C}_{2,\nu (\mu)}^{(5)}\right) $ can be handled together due to the symmetry in 5 qubits.  Similarly to the previous pair boundary, we will find a set of lines which bound the $\left( s\mathcal{C}_{1,\mu}^{(5)}, s\mathcal{C}_{2,\nu}^{(5)}\right) $ pairs.  We can again take advantage of $s\mathcal{C}_{2,\nu}^{(5)} \leq 0.366$, as well as the following two new maximizations,
\begin{eqnarray}
s\mathcal{C}_{2,\nu}^{(5)}+s\mathcal{C}_{1,\mu}^{(5)} &\leq& \frac {47}{100} \\
s\mathcal{C}_{2,\nu}^{(5)}+2 s\mathcal{C}_{1,\mu}^{(5)} &\leq& \frac 45 .
\end{eqnarray}
These three conditions bound the $\left( s\mathcal{C}_{1,\mu}^{(5)}, s\mathcal{C}_{2,\nu}^{(5)}\right) $ pairs within the original boundary for $0 \leq s\mathcal{C}_{1,\mu}^{(5)}$ and $0 \leq  s\mathcal{C}_{2,\nu}^{(5)} $, as shown in Figure 7.
\begin{center}
\begin{figure}[h]
\centering
\includegraphics[width = 85mm]{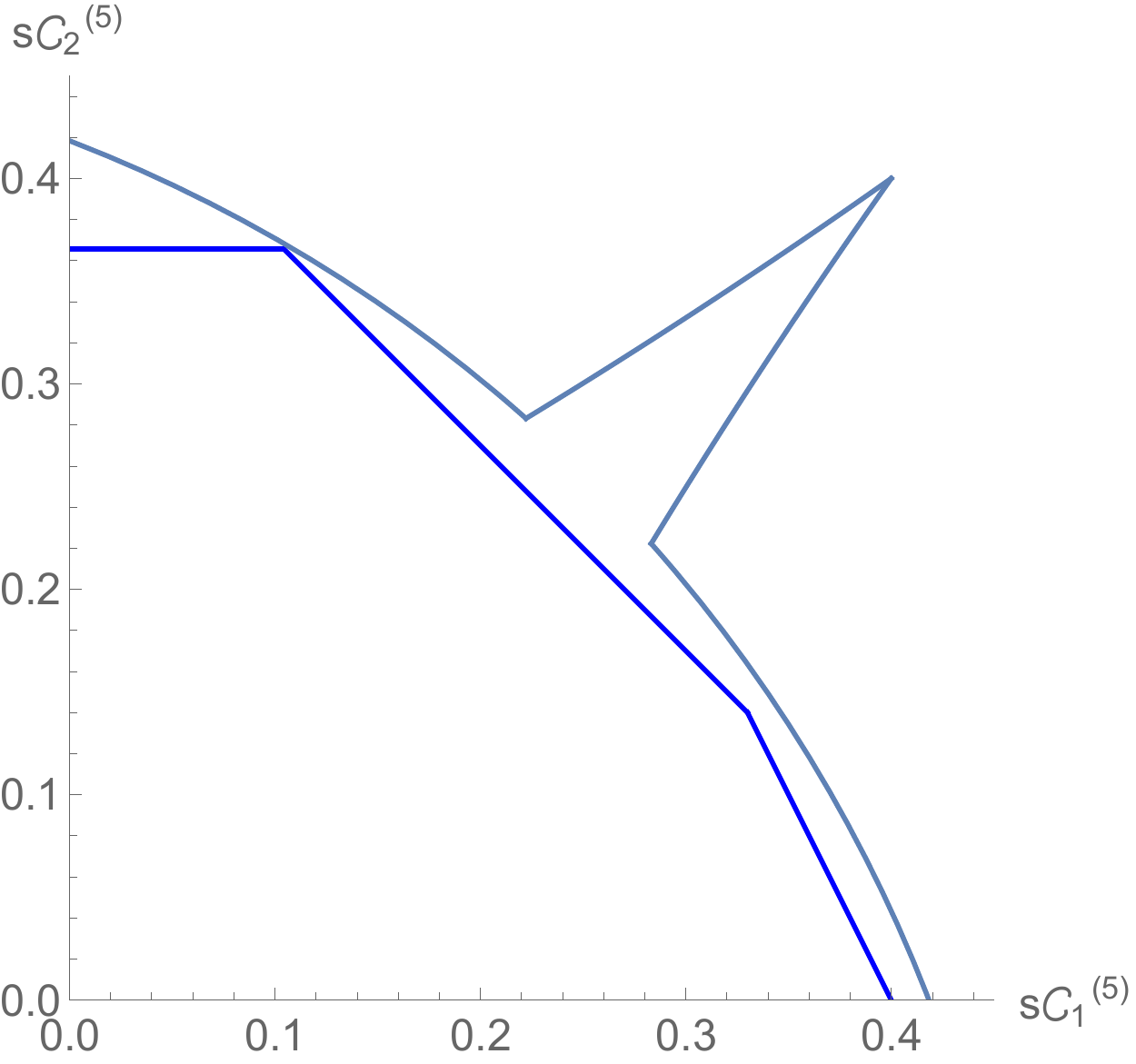}
\caption{The bounding conditions on the $\left( s\mathcal{C}_{1,\mu}^{(5)}, s\mathcal{C}_{2,\nu}^{(5)}\right) $ pairs (darker blue) with the overall boundary .}
\end{figure}
\end{center}
Note that these conditions on the $\left( s\mathcal{C}_{1,\mu}^{(5)}, s\mathcal{C}_{2,\nu}^{(5)}\right) $ do not actually fall within the original boundary for regions in $0.4 \leq s\mathcal{C}_{1,\mu}^{(5)}$ and $  s\mathcal{C}_{2,\nu}^{(5)} \leq 0$.  But given that $  s\mathcal{C}_{2,\nu}^{(5)} \leq 0$ for that region, the actual concurrences would be mapped to $\mathcal C_2^{(5)} = 0$, where the boundaries would then agree.

\end{document}